\def\E{\mathbb{E}}
\def\P{\mathbb{P}}
\def\R{\mathbb{R}}
\def\one{\mathbbm{1}}
\newcommand{\barc}{{\overline{c}}}
\setlist[enumerate]{leftmargin=.5in}
\setlist[itemize]{leftmargin=.5in}
\crefname{hypothesis}{Hypothesis}{Hypotheses}
\title{The Price of Information
\thanks{February 17, 2024\funding{SJ would like to acknowledge support from the Natural Sciences and Engineering Research Council of Canada (RGPIN-2018-05705).}}}
\author{
Sebastian Jaimungal\thanks{Department of Statistical Sciences, University of Toronto, Toronto, Canada
  (\email{sebastian.jaimungal@utoronto.ca}, \url{sebastian.statistics.utoronto.ca}; \email{xf.shi@utoronto.ca}, \url{xf-shi.github.io}).}
\and Xiaofei Shi\footnotemark[2]}
\def\E{\mathbb{E}}
\def\P{\mathbb{P}}
\def\R{\mathbb{R}}
\def\one{\mathbbm{1}}
\newcommand{\1}[1]{{{\mathds{1}}_{\{#1\}}}}
\newcommand{\F}{{\mathcal{F}}}
\newcommand{\A}{{\mathcal{A}}}
\newcommand{\wY}{{\widehat{Y}}}
\newcommand{\wV}{{\widehat{V}}}
\newcommand{\wB}{{\widehat{B}}}
\newcommand{\wC}{{\widehat{C}}}
\newcommand{\wc}{{\widehat{c}}}
\newcommand{\T}{{\mathfrak{T}}}
\newcommand{\tT}{{t\in\T}}
\newcommand{\IA}{{\textsf{IA}}}
\begin{document}

\maketitle

\begin{abstract}
When an investor is faced with the option to purchase additional information regarding an asset price, how much should she pay? To address this question, we solve for the indifference price of information in a setting where a trader maximizes her expected  utility of terminal wealth over a finite time horizon. If she does not purchase the information, then she solves a partial information stochastic control problem, while, if she does purchase the information, then she pays a cost and receives partial information about the asset's trajectory.  We further demonstrate that when the investor can purchase the information at any stopping time prior to the end of the trading horizon, she chooses to do so at a deterministic time.
\end{abstract}

\begin{keywords}
Price of Information, Portfolio Maximization, Trading with Partial Information, Informed Trading
\end{keywords}

\begin{MSCcodes}
60G15, 91B24, 91B70, 93E11, 93E20
\end{MSCcodes}

\section{Introduction}
When stock returns are predictable, the trading signal and the anticipation of future signals affect investor's trading strategies~\cite{kim1996dynamic, merton1975optimum}.
Without access to the trading signal,  rational investors may still trade strategically by filtering the trading signal~\cite{al2018outperformance,benevs1983estimation, fouque2015filtering, guasoni2006asymmetric} -- i.e., treat the optimization problem as a partial information control problem. As  an investor with full information always has higher utility than those without the full information~\cite{ bjork2010optimal, brendle2006portfolio,brody2009informed,guasoni2006asymmetric, guasoni2019should}, possessing information about the trading signal has a strictly positive value. The investor, therefore, has a \emph{maximum} price
that she is willing to pay to acquire this additional information. 

Further, consider a group of informed traders who have access to partial information regarding the trading signal. Often there are  regulatory constraints that prevent them from legally participating in the market.  They can, however, form an information agency (\IA) to ``sell'' some aspect of the trading signal that they possess to other investors. The maximum price an investor is willing to pay, as observed in~\cite{banek2003information} using a linear-quadratic preference, may be viewed as the maximum price that the {\IA}  can sell the information to the investor. 

In this article,  we refer to this maximum price as the ``\textit{indifference price}''. In particular, the indifference price is the price that the investor is willing to pay such that their expected utility from optimally investing and filtering the trading signal (i.e., in the absence of the information) equals the expected utility from paying the indifference price, receiving the information, and then optimally trading with the information (i.e., in the presence of full information). Naturally, the investor will accept any price below the indifference price and will reject any price above the indifference price.

Our main goal is to address the  question: how does an investor quantify the price of the trading signal? We show that, on a fixed trading horizon $[0,T]$, an investor with constant absolute risk aversion (CARA) has two parameters that affect their indifference price: (i) the investor's risk aversion coefficient and (ii) the signal-to-noise ratio (or more specifically, the signal-to-volatility ratio). Further, we investigate the question of timing: when should the investor subscribe to the trading signal (until the end of the time horizon)? We show that if the investor is allowed to choose a stopping time at which to subscribe to the trading signal (and keep her subscription to the terminal time $T$), then, somewhat surprisingly, the optimal stopping time is a deterministic stopping time. 

The remainder of the paper is organized as follows. 
In Section~\ref{sec: single period}, we develop a single-period model, derive the participating investor's optimal informed and uninformed trading strategy, and obtain the closed form expression for the indifference price. 
In Section~\ref{sec: continuous}, we develop the continuous time analogue of the model, and once again determine the indifference price in closed form and derive a limiting subscription rate for subscribing to the information when the time horizon is large.
Finally, we  consider the optimal subscription time problem in Section~\ref{sec:subscribe}.
Proofs are found in Appendix~\ref{appdix: proofs}. 

Throughout, we fix a filtered probability space $(\Omega,\F, \{\F_t\}_{\tT},\P)$ supporting two independent standard Brownian motions, $(B_t^Y)_{\tT}$ and $(B_t^Z)_{\tT}$, and $\T:=[0,T]$ being the trading horizon. We use $\A(X)$ to denote the collection of all processes adapted to the filtration generated by the process $X$. 

\section{A Single-Period Model}\label{sec: single period}

We introduce the essential ideas in a single-period model.

\subsection{Single-Period Model Setup}\label{ssec: single-period setup}

 Suppose interest rates are zero, the initial price of a risky asset price is $S_0\in\R_+$, and at time $1$ the risky asset's price is 
\begin{align}
\label{dyn: single-period stock}
S_1 &= S_0 +\mu+Y + \sigma_Z B_1^Z, \qquad\mbox{where} \quad Y = y+ \sigma_Y B_1^Y. 
\end{align}
The investor knows the values of $\mu,y \in \R,\;\sigma_Y, \sigma_Z\in\R_+/\{0\}$ and they are constants. The increment of the stock price consists of three parts: (i) a (deterministic) growth rate $\mu$, (ii) a trading signal $Y$ which the investor cannot observe, and (iii) a completely random part $B_1^Z$. 

The investor begins with some initial wealth $X_0=x\in\R_+$, and wishes to determine how much to invest in the risky asset.  The investor can purchase the information from an {\IA} that  charges $C$ to provide her with the value of $Y$ before the investor makes decisions. 
To be more precise, at time $0$, the investor chooses $H\in\{0,1\}$ and pays $H\,C$ to the {\IA}. That $H\in\{0,1\}$ is the investor's decision variable corresponding to acquiring the information about $Y$ or not.
Conditional on the information the investor has, she chooses the position $\varphi$ in the stock    
to maximize her exponential utility of the terminal wealth $X_1$, i.e., she finds the maximizer of 
\begin{align}\label{target: single period}
 \max_{\varphi\,\in\, \sigma(H\,Y)-\text{measurable}}\E[\,-\textstyle\exp\left\{-\gamma \,X_1\right\}\,],
\end{align}
where the initial and terminal wealth  are
\begin{align}
\label{dyn: single-period wealth}
X_0 = x-H\,C, \quad X_1 := X_0 + \varphi \,(S_1-S_0) = x+ \varphi\,(\mu+Y+\sigma_ZB_1^Z)- H\,C,
\end{align}
and the strategy $\varphi$ is $\sigma(H\,Y)$-measurable.
When $H=1$, the investor determines her strategy knowing $Y$, hence $\varphi$ is  $\sigma(Y)$-measurable. When $H=0$, we have $H\,Y = 0$, which renders the $\sigma$-algebra trivial,  the investor makes decisions without any information  and her strategy is now $\varphi\in\R$.

\subsection{Single-Period Indifference Price}
\label{ssec: single-period pricing}

We define the indifference price of the extra information as the price $\wC$ such that the investor's expected utility is the same whether or not she purchases the information. To this end, we determine the investor's optimal strategies and her value functions in both the uninformed (UI) and informed (I) cases, and report the indifference price $\wC$ in Theorem~\ref{thm: single-period} -- the proof may be found in  Section~\ref{proof: single-period}. 
That is, $\wC$ satisfies $V^{I}(x,y;\wC) = V^{UI}(x,y)$.

\begin{theorem}
\label{thm: single-period}
In the single-period model setting, the following statements hold:
\begin{enumerate}
\item In the informed case, with purchasing information $Y$ with price $C$ the optimal informed strategy is $\varphi_I^* = ({\mu+Y})/{\gamma\sigma_Z^2}$,
and the optimal expected utility of the investor is
\begin{align}\label{i: single-period utility}
V^I(x,y;C)=
-\textstyle\exp\left\{
-\gamma\,(x-C)-\frac{(\mu+y)^2}{2\left(\sigma^2_Y+\sigma^2_Z\right) }- \frac{1}{2} \log\left(1+ \frac{\sigma^2_Y}{\sigma^2_Z}\right)
\right\}
. 
\end{align}
\item In the uninformed case,  the optimal strategy is $\varphi^*_{UI} = ({\mu+y})/{\gamma\,(\sigma_Y^2+\sigma_Z^2)}$, and the optimal expected utility of the investor is
\begin{align}\label{ui: single-period utility}
V^{UI}(x,y) = -
\textstyle\exp\left\{
-\gamma x-\frac{(\mu+y)^2}{2\left(\sigma^2_Y+\sigma^2_Z\right)}
\right\}.
\end{align}
\item The indifferent price is 
\begin{align}\label{c: single-period}
\wC = \textstyle \frac{1}{2\gamma}\log\left(1+ \frac{\sigma^2_Y}{\sigma^2_Z}\right) \geq 0.
\end{align}
\end{enumerate}
\end{theorem}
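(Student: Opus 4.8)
The plan is to exploit the CARA--Gaussian structure: in each case the conditional optimization over the position reduces to minimizing a quadratic, and the residual expectations are Gaussian integrals. For the informed case (\(H=1\)) I would first condition on the signal \(Y\). Since \(\varphi\) is required to be \(\sigma(Y)\)-measurable, I treat it as a measurable function of \(Y\) and evaluate the inner expectation over \(B_1^Z\) alone. Given \(Y\), the terminal wealth in \eqref{dyn: single-period wealth} is Gaussian with conditional mean \(x-C+\varphi(\mu+Y)\) and conditional variance \(\varphi^2\sigma_Z^2\), so the Gaussian moment-generating function gives
\begin{equation*}
\E\!\left[-\exp\{-\gamma X_1\}\mid Y\right]
= -\exp\!\left\{-\gamma(x-C)-\gamma\varphi(\mu+Y)+\tfrac12\gamma^2\varphi^2\sigma_Z^2\right\}.
\end{equation*}
Minimizing this quadratic exponent in \(\varphi\), pointwise in \(Y\), yields \(\varphi_I^*=(\mu+Y)/(\gamma\sigma_Z^2)\) and a conditional value of \(-\exp\{-\gamma(x-C)-(\mu+Y)^2/(2\sigma_Z^2)\}\).

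The step I expect to be the crux is taking the outer expectation over \(Y\sim N(y,\sigma_Y^2)\) of \(\exp\{-(\mu+Y)^2/(2\sigma_Z^2)\}\). This is a Gaussian integral \(\E[\exp\{-aW^2\}]\) with \(W=\mu+Y\sim N(\mu+y,\sigma_Y^2)\) and \(a=1/(2\sigma_Z^2)\); completing the square---equivalently, invoking
\begin{equation*}
\E\!\left[e^{-aW^2}\right]=\left(1+2a\sigma_Y^2\right)^{-1/2}\exp\!\left\{-\frac{a(\mu+y)^2}{1+2a\sigma_Y^2}\right\}
\end{equation*}
---produces both the exponent term \(-(\mu+y)^2/(2(\sigma_Y^2+\sigma_Z^2))\) and the prefactor \((1+\sigma_Y^2/\sigma_Z^2)^{-1/2}\), which I rewrite as \(\exp\{-\tfrac12\log(1+\sigma_Y^2/\sigma_Z^2)\}\) to recover \eqref{i: single-period utility}. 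I would also record the justification that pointwise minimization inside the conditional expectation delivers the global optimum: this follows from the tower property, since the pointwise optimizer is a continuous (hence measurable) function of \(Y\) and is therefore an admissible \(\sigma(Y)\)-measurable strategy.

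The uninformed case (\(H=0\)) is shorter. Here \(\varphi\in\R\) is deterministic, and by independence of \(B^Y\) and \(B^Z\) the aggregate increment \(\mu+Y+\sigma_Z B_1^Z=(\mu+y)+\sigma_Y B_1^Y+\sigma_Z B_1^Z\) is Gaussian with mean \(\mu+y\) and variance \(\sigma_Y^2+\sigma_Z^2\). The same moment-generating-function and quadratic-minimization steps then give \(\varphi_{UI}^*=(\mu+y)/(\gamma(\sigma_Y^2+\sigma_Z^2))\) and \eqref{ui: single-period utility} directly, with no further integration.

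Finally, for the indifference price I would equate \(V^I(x,y;\wC)=V^{UI}(x,y)\). Because both value functions have the form \(-\exp\{\cdot\}\), this reduces to equating exponents; the \(-(\mu+y)^2/(2(\sigma_Y^2+\sigma_Z^2))\) terms cancel, leaving the linear equation \(\gamma\wC=\tfrac12\log(1+\sigma_Y^2/\sigma_Z^2)\), which gives \eqref{c: single-period}. Nonnegativity is then immediate: \(\sigma_Y^2/\sigma_Z^2\ge 0\) forces the logarithm to be nonnegative and \(\gamma>0\).
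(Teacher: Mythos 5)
Your proof is correct and takes essentially the same route as the paper's: condition on $Y$ and minimize the quadratic exponent to get $\varphi_I^*$, do the analogous deterministic quadratic minimization for $\varphi_{UI}^*$, and equate the exponents of the two value functions to solve for $\wC$. The only difference is expository --- where the paper tersely invokes ``the characteristic function of $Y$'' you spell out the Gaussian quadratic-exponential integral $\E\left[e^{-aW^2}\right]=\left(1+2a\sigma_Y^2\right)^{-1/2}\exp\left\{-\frac{a(\mu+y)^2}{1+2a\sigma_Y^2}\right\}$ and the measurability justification for pointwise optimization, both of which are exactly the computations the paper intends.
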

  
As long $\sigma_Y>0$, then $\wC>0$. The strict inequality stems from the trading signal $Y$ being only one of the sources of randomness in the market, the other being $B_1^Z$.
We can easily see that the indifference price $\wC$ is increasing in the signal-to-noise ratio ${\sigma_Y}/{\sigma_Z}$. That is, the larger the variance of the trading signal compared to the variance of the stock price, the higher the value of the information.
The other important parameter that appears in the indifference price $\wC$  is the investor's risk aversion $\gamma$. 
To wit, the indifference price $\wC$  is proportional to the risk capacity $1/\gamma$ -- the reciprocal of the investor's risk aversion $\gamma$.


\section{A Tractable Continuous-Time Model}\label{sec: continuous}

In the continuous-time version of the problem, the investor no longer na\"ively chooses a static strategy at time $0$, rather, she can filter the trading signal process $Y$ based on the stock price. We next explicitly show how the information filtering is conducted, and develop an indifference price in the continuous-time setup.

\subsection{Continuous-Time Model Setup}\label{ssec: continuous setup}
For the continuous-time analogue of ~\eqref{dyn: single-period stock},
we consider the following hierarchical structure:
\begin{align}
\label{dyn: continuous-time stock}
dS_t = \left(\mu+Y_t\right) dt +\sigma_z \,dB^Z_t, \qquad S_0 \in\R
\end{align}
where the trading signal process $(Y_t)_\tT$ satisfies the SDE
\begin{align}
\label{dyn: factor process}
dY_t = \sigma_y\, dB^Y_t, \qquad Y_0 = y\in\R.
\end{align}
Here, the expected excess return $\mu>0$, the volatility of the stock $\sigma_z$, and the volatility of the trading signal process $\sigma_y$ are all treated as known constants to the public.
There are two sources of randomness in the stock price dynamics: the Brownian motion $B^Z$ that drives the price volatility, and the trading signal process $Y$, driven by an independent Brownian motion $B^Y$. 
The information regarding the value of the trading signal process $Y$ is not publicly available, but can be purchased through an {\IA} at time $0$ with charge $C(0;T)$. Different from Kyle-type insider trading models~\cite{amendinger1998additional, banek2003information, banerjee2020strategic, brody2009informed, davis1991value, kyle.85}, the {\IA} only reveals the information of $Y$ adaptively, i.e., at time $t$, the {\IA} only knows the historical value $(Y_u)_{u\in[0,t]}$, but not any value of $Y$ at times later than $t$. Thus, we can view the {\IA} as providing precision on the expected instantaneous return of the stock, rather than precision on the value of the stock at maturity.

As in the single-period model in Section~\ref{sec: single period},  there is an investor who participates in the market but does not have access to the trading signal process $Y$. Starting with initial wealth $X_{0-}=x\in\R$ slightly before  time 0, there are two options for the investor: (i) subscribe to the information from the \IA~regarding $Y$ and make an informed decision ($H=1$), or (ii) do not subscribe to the information and make decisions by filtering $Y$ ($H=0$).

If the investor purchases the information about $Y$, at each time $\tT$, she can decide on a trading strategy $\varphi_t$ based on the whole history of the stock price $(S_u)_{u\in[0,t]}$  and the trading signal process $(Y_u)_{u\in[0,t]}$. In this case, by setting $H=1$, the admissible set of $\varphi$, denoted by $\A(S,Y)$, contains all processes adapted to the filtration generated by the stock price and the trading signal process. Contrastingly, if the investor chooses not to purchase the information on the process $Y$, at each time $\tT$, she can only choose a trading strategy $\varphi_t$ based only on the past information of the stock price $(S_u)_{u\in[0,t]}$. In this case, by setting $H=0$, the admissible set of $\varphi$ shrinks to $\A(S)$, which contains  processes adapted to the filtration generated by the stock price alone. In summary, based on the information she has, the investor maximizes her exponential utility of the terminal wealth $X_T$, i.e., she aims to solve for the maximizer of
\begin{align}\label{target: continuous}
 \max_{\varphi\in\A(S, HY)}\E[-\textstyle\exp\left\{-\gamma \,X_T\right\}],
\end{align}
where the wealth process is 
\begin{align}
\label{dyn: continuous-time wealth}
X_t =X_0+ \textstyle\int_0^t \varphi_u\,(\mu+Y_u)\, du + \sigma_z \textstyle\int_0^t \varphi_u\, dB_u^Z, \qquad
X_0 = x - H\,C(0;T). 
\end{align}
  
In this setting, our goal is to determine the indifference price of acquiring the information for the whole interval $\T$, which, similarly to the discrete-time problem in Section~\ref{sec: single period}, requires solving the portfolio optimization problems in both the uninformed and informed cases.

\subsection{Continuous-Time Indifference Price}
Determining the price of information requires solving the portfolio optimization problem with partial information. The extant literature on the topic is vast, e.g.,  see~\cite{al2018outperformance, benevs1983estimation, bjork2010optimal, brody2009informed, casgrain2019trading, cvitanic2006dynamic, detemple1986asset, ferland2008fbsde,fouque2015filtering, guasoni2006asymmetric,
papanicolaou2019backward, pikovsky1996anticipative,rieder2005portfolio,   xia2001learning}, including numerical algorithms such as~\cite{ludkovski2009simulation, ludkovski2012finite, yang2023decision}. Closed form solutions, however, are limited. 

In our setting, as the stock price and the factor process are jointly Gaussian, information filtering (see, e.g.,~\cite{al2018outperformance,  brendle2006portfolio, guasoni2006asymmetric, hitsuda1968representation}) 
can be explicitly constructed. We summarize the key result in Lemma~\ref{lem: filtering}, and refer to Appendix~\ref{appdx: filtering} for details.
\begin{lemma}
\label{lem: filtering}
Let $\F^S:=\{\F_t^{S}\}_{\tT}$ denote the natural filtration generated by the stock price and $(\wY_t)_{\tT}$ denotes the filtered process, s.t., $\wY_t:=\E\left[Y_t\,|\,\F_t^{S}\right]$. Then
\begin{align}
\label{eqn: filtered Y}
\textstyle \wY_t = y+\sigma_y \int_0^t \tanh\left(\frac{\sigma_y}{\sigma_z}\,u\right) d\widehat{B}_u^Z, 
\end{align}
where $\big(\widehat{B}_t^Z\big)_{\tT}$ is an $\F^S$-Brownian motion defined by
\begin{align}
\label{eqn: filtered BM}
\textstyle\widehat{B}^Z_ t = \int_0^t \frac{1}{\sigma_z} \left(Y_u - \wY_u\right) du + B_t^Z.
\end{align}
\end{lemma}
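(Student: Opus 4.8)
The plan is to recognize the system \eqref{dyn: continuous-time stock}--\eqref{dyn: factor process} as a linear--Gaussian state-space model and apply the Kalman--Bucy filtering theorem. Here the unobserved signal $Y$ is the state, with dynamics $dY_t = \sigma_y\, dB^Y_t$ (zero drift, constant diffusion $\sigma_y$), while the observable stock price $S$ is the observation channel $dS_t = (\mu + Y_t)\,dt + \sigma_z\, dB^Z_t$, whose drift is affine in the state with unit loading and observation noise $\sigma_z$. Since $Y_0 = y$ is known exactly, the initial conditional law of $Y_0$ given $\F_0^S$ is a point mass, so the conditional mean starts at $\wY_0 = y$ and the conditional variance starts at $P_0 := \mathrm{Var}(Y_0 \mid \F_0^S) = 0$.

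First I would introduce the innovation process
\begin{align}
d\widehat{B}^Z_t = \tfrac{1}{\sigma_z}\big(dS_t - (\mu + \wY_t)\,dt\big),
\end{align}
and verify that it agrees with \eqref{eqn: filtered BM}: substituting the dynamics of $S$ gives $d\widehat{B}^Z_t = \tfrac{1}{\sigma_z}(Y_t - \wY_t)\,dt + dB^Z_t$, which integrates to the stated expression. The key structural fact supplied by the filtering theorem is that $\widehat{B}^Z$ is a continuous $\F^S$-martingale with quadratic variation $\langle \widehat{B}^Z\rangle_t = t$, hence an $\F^S$-Brownian motion by L\'evy's characterization; this is precisely what renders the filtered dynamics adapted to $\F^S$.

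Second, the Kalman--Bucy equations for this one-dimensional system read
\begin{align}
d\wY_t = \tfrac{P_t}{\sigma_z}\, d\widehat{B}^Z_t, \qquad \dot P_t = \sigma_y^2 - \tfrac{P_t^2}{\sigma_z^2}, \quad P_0 = 0,
\end{align}
where the gain $P_t/\sigma_z$ comes from the scalar observation loading and noise, and the second equation is the scalar Riccati ODE for the conditional variance. I would solve the Riccati equation by separation of variables: factoring $\sigma_y^2 - P^2/\sigma_z^2 = \sigma_z^{-2}(\sigma_y\sigma_z - P)(\sigma_y\sigma_z + P)$ and integrating from $P_0 = 0$ yields $P_t = \sigma_y\sigma_z \tanh\!\big(\tfrac{\sigma_y}{\sigma_z}\,t\big)$. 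One checks this directly, since $\dot P_t = \sigma_y^2\big(1 - \tanh^2(\tfrac{\sigma_y}{\sigma_z}t)\big) = \sigma_y^2 - P_t^2/\sigma_z^2$ and $P_0 = 0$.

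Finally, substituting the gain $P_t/\sigma_z = \sigma_y \tanh(\tfrac{\sigma_y}{\sigma_z}t)$ into $d\wY_t = (P_t/\sigma_z)\,d\widehat{B}^Z_t$ and integrating from $\wY_0 = y$ produces \eqref{eqn: filtered Y}. The only genuinely nontrivial step is the appeal to the Kalman--Bucy theorem itself, in particular the claim that the innovation $\widehat{B}^Z$ is an $\F^S$-Brownian motion and that the conditional law of $Y_t$ given $\F_t^S$ remains Gaussian; I would either cite this from a standard reference such as Liptser and Shiryaev, or reprove it by writing $Y_t - \wY_t$ as a Gaussian process and deriving ODEs for its first two conditional moments. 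Everything downstream, namely solving the Riccati equation and integrating the gain, is then routine.
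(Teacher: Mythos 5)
Your proof is correct, but it takes a genuinely different route from the paper's. You treat the pair $(Y,S)$ as a linear--Gaussian state-space model and invoke the Kalman--Bucy theorem: the innovation $d\widehat{B}^Z_t = \sigma_z^{-1}\bigl(dS_t - (\mu+\wY_t)\,dt\bigr)$ is an $\F^S$-Brownian motion by the Fujisaki--Kallianpur--Kunita machinery, the filter gain is $P_t/\sigma_z$, and the scalar Riccati equation $\dot P_t = \sigma_y^2 - P_t^2/\sigma_z^2$ with $P_0=0$ (since $Y_0=y$ is known) has the explicit solution $P_t = \sigma_y\sigma_z\tanh\bigl(\tfrac{\sigma_y}{\sigma_z}t\bigr)$, which immediately yields \eqref{eqn: filtered Y}; all of this checks out, and the integrability hypotheses of the Kalman--Bucy theorem are trivially satisfied here because $Y$ is Gaussian with linearly growing variance. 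The paper instead argues through the canonical (Hitsuda) representation of Gaussian processes: it verifies that $S$ is Gaussian with an explicit covariance, applies Shepp's theorem and Hitsuda's representation to obtain a \emph{unique} Brownian motion $\widehat{B}^Z$ and a unique Volterra kernel $\kappa\in L^2([0,T]^2)$ with $S_t = S_0 + (\mu+y)t + \sigma_z\widehat{B}^Z_t - \int_0^t\!\int_0^u \kappa(u,v)\,d\widehat{B}^Z_v\,dv$, solves the resulting integral equation to get $\kappa(t,u) = -\sigma_y\tanh\bigl(\tfrac{\sigma_y}{\sigma_z}u\bigr)\mathds{1}_{\{u\le t\}}$, and then reads off \eqref{eqn: filtered Y}--\eqref{eqn: filtered BM} by comparing dynamics. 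Your route is the more standard and arguably more economical one for this model -- the Brownian property of the innovation and the Gaussianity of the conditional law come packaged with the theorem, and only a separable ODE remains -- whereas the paper's route buys uniqueness of the representation pair $(\widehat{B}^Z,\kappa)$ with respect to $\F^S$ directly from the Gaussian-measure equivalence theory (the framework it shares with the cited Guasoni arguments), at the cost of computing a covariance function and solving a kernel integral equation. Either argument fully establishes the lemma; if you use Kalman--Bucy as a black box you should, as you note, cite a precise statement (e.g., Liptser--Shiryaev) covering the deterministic initial condition $P_0=0$.
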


With the information filtering Lemma~\ref{lem: filtering}, particularly the explicit form of $\wY_t$ in~\eqref{eqn: filtered Y}, we can solve the informed and uninformed portfolio optimization problem~\eqref{target: continuous} in closed form. The results are summarized in Theorem~\ref{thm: continuous}.
\begin{theorem}
\label{thm: continuous}
In the continuous-time  setting  in Section~\ref{ssec: continuous setup},  
the following statements hold:
\begin{enumerate}
\item The optimal informed trading strategy is $\varphi^I_t = ({\mu+Y_t})/{\gamma\sigma_z^2}$, and the
optimal wealth $X_t$ follows~\eqref{dyn: continuous-time wealth} with $H=1$. 
The investor's value function is therefore
\begin{align}\label{value:I}
V^I(t,X_t,Y_t; C(0;T))
= -\textstyle\exp\Big\{
&-\gamma X_t-\textstyle\frac{(\mu+Y_t)^2}{2\sigma_y\sigma_z}\tanh\left(\frac{\sigma_y}{\sigma_z}(T-t)\right) 
\notag\\&\quad
-\tfrac{1}{2}\textstyle\log\cosh\left(\frac{\sigma_y}{\sigma_z}(T-t)\right)
\Big\}
. 
\end{align}
\item The optimal uninformed trading strategy (with filtering of the trading signal $\wY_t$) is $\varphi^{UI}_t = (\mu+\wY_t)\cosh\left(\frac{\sigma_y}{\sigma_z}(T-t)\right)\cosh\left(\frac{\sigma_y}{\sigma_z}t\right)\Big/{\gamma\sigma_z^2 \cosh\left(\frac{\sigma_y}{\sigma_z}T\right)}$, and the optimal wealth $X_t$ follows~\eqref{dyn: continuous-time wealth} with $H=0$.
The investor's value function is therefore
\begin{align}\label{value:UI}
V^{UI}(t, X_t, \wY_t) 
= -\textstyle\exp\Bigg\{&-\gamma X_t - \textstyle\frac{(\mu+\wY_t)^2}{2\sigma_y\sigma_z}\frac{\sinh\left(\frac{\sigma_y}{\sigma_z}(T-t) \right)\cosh\left(\frac{\sigma_y}{\sigma_z}t\right)}{\cosh\left(\frac{\sigma_y}{\sigma_z}T\right)}+\frac{1}{2}\log\frac{\cosh\left(\frac{\sigma_y}{\sigma_z}t\right)}{\cosh\left(\frac{\sigma_y}{\sigma_z}T\right)} 
\notag\\&\quad+ \textstyle\frac{\sigma_y}{4\sigma_z}(T-t)\tanh\left(\frac{\sigma_y}{\sigma_z}T\right)+  \frac{\sinh\left(\frac{\sigma_y}{\sigma_z}(T-t)\right)\sinh\left(\frac{\sigma_y}{\sigma_z}t\right)}{4\cosh\left(\frac{\sigma_y}{\sigma_z}T\right)}
\Bigg\}.
\end{align}
\item The indifferent price $\wC(0;T)$ is 
\begin{align}\label{c:continuous}
\wC(0;T) =\textstyle \frac{\sigma_y}{4\gamma\sigma_z}T \tanh\left(\frac{\sigma_y}{\sigma_z}T\right) \geq 0.
\end{align}
\end{enumerate}
\end{theorem}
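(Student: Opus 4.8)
The plan is to attack both portfolio problems by dynamic programming, positing in each case a value function of the separable exponential form $V(t,x,\cdot)=-\exp\{-\gamma x - g(t,\cdot)\}$ dictated by the CARA preference, and reducing the associated Hamilton--Jacobi--Bellman (HJB) equation to a small system of ODEs for the coefficients of a \emph{quadratic} $g$. Throughout I write $a:=\sigma_y/\sigma_z$. For the informed case (Part 1) the state is $(X_t,Y_t)$, and because $Y$ is driven by $B^Y$ while wealth is driven by the independent $B^Z$, the HJB carries no cross term: with the ansatz $V^I=-\exp\{-\gamma x - g(t,y)\}$ the first-order condition in $\varphi$ gives $\varphi^I_t=(\mu+Y_t)/(\gamma\sigma_z^2)$ at once, and substituting back collapses the HJB to the scalar PDE $\partial_t g = -\tfrac{(\mu+y)^2}{2\sigma_z^2}+\tfrac12\sigma_y^2[(\partial_y g)^2-\partial_{yy}g]$ with $g(T,\cdot)=0$. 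Writing $g=\tfrac12 A(t)(\mu+y)^2+B(t)$ separates this into the Riccati equation $A'=\sigma_y^2 A^2-\sigma_z^{-2}$ and the quadrature $B'=-\tfrac12\sigma_y^2 A$; solving with $A(T)=B(T)=0$ yields $A(t)=(\sigma_y\sigma_z)^{-1}\tanh(a(T-t))$ and $B(t)=\tfrac12\log\cosh(a(T-t))$, which is exactly~\eqref{value:I}.

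The uninformed case (Part 2) is where the real work lies, and the first move is to pass to the observable filtration $\F^S$ using Lemma~\ref{lem: filtering}. The innovations identity~\eqref{eqn: filtered BM} rewrites the wealth dynamics~\eqref{dyn: continuous-time wealth} as $dX_t=\varphi_t(\mu+\wY_t)\,dt+\sigma_z\varphi_t\,d\wB_t^Z$, while~\eqref{eqn: filtered Y} gives $d\wY_t=\sigma_y\tanh(at)\,d\wB_t^Z$. The crucial structural point---and the reason $V^{UI}$ is more intricate than $V^I$---is that $X$ and $\wY$ are now driven by the \emph{same} Brownian motion $\wB^Z$, so the HJB for $V^{UI}=-\exp\{-\gamma x - h(t,\hat y)\}$ acquires a cross term $\sigma_z\varphi\,\beta(t)\,\partial_{x\hat y}V$ with $\beta(t):=\sigma_y\tanh(at)$. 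With the quadratic ansatz $h=\tfrac12 P(t)(\mu+\hat y)^2+Q(t)$ the first-order condition produces the stated $\varphi^{UI}_t$, and---pleasantly---the $\beta^2P^2$ contributions cancel, so the expected Riccati equation for $P$ degenerates to the \emph{linear} ODE $P'=2a\tanh(at)\,P-\sigma_z^{-2}$. Solving this by the integrating factor $\cosh^{-2}(at)$ with $P(T)=0$ gives $P(t)=\cosh(at)\sinh(a(T-t))\big/\big(\sigma_y\sigma_z\cosh(aT)\big)$, matching the $(\mu+\wY_t)^2$ coefficient in~\eqref{value:UI}.

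The scalar part then obeys $Q'=-\tfrac12\beta(t)^2 P(t)$ with $Q(T)=0$, so $Q(t)=\tfrac12\int_t^T\beta^2 P$. I expect this quadrature to be the main obstacle: the integrand is a product $\tanh^2(au)\cosh(au)\sinh(a(T-u))$ of hyperbolic functions, and collapsing $\int_t^T\beta^2P$ into the three explicit pieces of~\eqref{value:UI}---the $\log$ term, the term linear in $(T-t)$, and the $\sinh\cdot\sinh$ term---requires repeated product-to-sum and angle-addition identities for $\sinh$ and $\cosh$; this is the delicate, error-prone step, most safely checked by differentiating the claimed closed form and verifying it reproduces $-\tfrac12\beta^2 P$.

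Finally, for the indifference price (Part 3) I would evaluate both value functions at $t=0$, where $\wY_0=y$ and the identities $\cosh0=1$, $\sinh0=0$ collapse~\eqref{value:UI}. Imposing $V^I(0,x-\wC,y;\wC)=V^{UI}(0,x,y)$ and comparing exponents, the leading $\tfrac{(\mu+y)^2}{2\sigma_y\sigma_z}\tanh(aT)$ and $-\tfrac12\log\cosh(aT)$ contributions cancel identically, leaving only $\gamma\wC=\tfrac{\sigma_y}{4\sigma_z}T\tanh(aT)$, which is~\eqref{c:continuous}, and nonnegativity is then immediate. The one remaining non-routine ingredient beyond the $Q$-quadrature is a verification argument: confirming that the candidate strategies $\varphi^I,\varphi^{UI}$ are admissible and that the requisite uniform integrability of the exponential wealth holds, so that the candidate functions are genuinely the value functions; for this linear-Gaussian/CARA structure I would discharge it by a standard localization-and-martingale argument.
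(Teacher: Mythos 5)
Your proposal is correct and takes essentially the same route as the paper: HJB equations with an exponential--quadratic ansatz for both the informed problem and the filtered problem on the innovations filtration, where the uninformed Riccati equation degenerates (after the cross-term cancellation you note) to exactly the linear ODE the paper solves (your $P$ is $-2A_{UI}$ up to sign convention), followed by the scalar quadrature for $Q=-B_{UI}$ and equating the two value functions at $t=0$ to extract $\wC(0;T)$. The paper handles your flagged verification step by invoking Corollary 5.14 of \cite[Chapter 3]{karatzas2012brownian} to conclude the value process is a true martingale, in line with the martingale argument you propose.
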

\begin{proof}
To ease notation, we define
 $h(t) := \textstyle\tanh\left(\frac{\sigma_y}{\sigma_z}t\right)$ and solve the optimization problem in both the informed and uninformed cases.
\paragraph{Informed Case} In the informed case, at time 0, the investor pays $C(0;T)$ to acquire the full information of $Y_t$ on the trading interval $[0,T]$ and, hence, has admissible strategies $\varphi\in\A(S,Y)$. The investor's wealth at time $t$ is thus given by~\eqref{dyn: continuous-time wealth}, and
her goal is to choose $\varphi\in\A(S,Y)$ 
 to maximize \eqref{target: continuous} with $H=1$.
The corresponding value function is defined as $
V^I (t, X_t, Y_t; C(0;T)) := \sup_{\varphi\in\A(S,Y)} \E\left[ - \textstyle\exp\left\{-\gamma X_T\right\} |\mathcal{F}_t\right]
$.
Standard arguments suggest that $V^I(t,x,y; C(0;T))$ satisfies the HJB equation:
\begin{equation}\label{HJB:full}
\frac{\partial V^{I}}{\partial t} + \frac{1}{2}\sigma_y^2  \frac{\partial^2 V^{I}}{\partial y^2} +  \sup_\varphi \left\{ \frac{1}{2}\sigma_z^2\varphi^2 \frac{\partial^2 V^{I}}{\partial x^2} + \varphi (\mu+y)\frac{\partial V^{I}}{\partial x} \right\} =0, 
\end{equation}
with terminal condition $V^{I}(T, x, y; C(0;T)) = - \textstyle\exp\left\{-\gamma (x-C(0;T))\right\}$.
Point-wise optimization yields the optimal controls in feedback form $\varphi^I = -((\mu+ y){\partial V^I/\partial x})/(\sigma_z^2{\partial^2 V^I/\partial^2 x})$.
Making the ansatz that
$
V^I(t,x,y; C(0;T)) =-\textstyle\exp\left\{-\gamma(x-C(0;T)) +A_I(t) (\mu+y)^2 + B_I(t)\right\} 
$, and substituting into the HJB equation~\eqref{HJB:full}, we find
 $(A_I,B_I)$ satisfies the following ODEs:
$$
\textstyle
\tfrac{d}{dt}A_I(t)  = \frac{1}{2\sigma_z^2}-2 \sigma_y^2 A_I(t)^2  \quad \text{and} \quad
\tfrac{d}{dt}B_I(t)  = - \sigma_y^2 A_I(t),
$$
with terminal conditions $A_I(T)=B_I(T)=0$.
Solving this system of ODEs, we obtain that
\begin{align*}
\textstyle
A_I(t) = -\frac{1}{2\sigma_y\sigma_z} \tanh\left(\frac{\sigma_y}{\sigma_z}(T-t)\right), \quad
B_I(t) = - \frac{1}{2}\log \cosh\left(\frac{\sigma_y}{\sigma_z}(T-t)\right) . 
\end{align*}
Accordingly, $\varphi^I_t = (\mu+Y_t)/\gamma\sigma_z^2$, and the value function is~\eqref{value:I}. 
By Corollary 5.14 in\cite[Chapter 3]{karatzas2012brownian}, the value function is a true martingale.  

\paragraph{Uninformed Case}
Now let's assume the investor does not purchase the full information but filters it from $S$. 
By Lemma~\ref{lem: filtering}, 
using the filtered signal process $\wY$ from~\eqref{eqn: filtered Y} and the Brownian motion $\wB$ from~\eqref{eqn: filtered BM} under the restricted filtration $\{\mathcal{F}_t^{S}\}_{\tT}$, we can re-write the stock price dynamic as 
$d S_t 
= (\mu+\wY_t)dt + \sigma_z d\widehat{B}_t^Z
$.
As before, for wealth $X_t$ follows~\eqref{dyn: continuous-time wealth} with $H=0$, the value function associated with the problem \eqref{target: continuous} is defined as
$
V^{UI}(t, X_t, \wY_t) :=  \max_{\varphi\in\A(S)} \E\left[ -\textstyle\exp\left\{-\gamma X_T\right\}| \mathcal{F}_t^S\right].
$
From standard analysis, $V^{UI}(t,x,y)$ satisfies the HJB equation (where we  write $y$ instead of  $\widehat{y}$):
\begin{multline}
\label{HJB:filtered}
  {\textstyle\frac{\partial V^{UI}}{\partial t}}
  \!+\! \sup_\varphi 
  {\textstyle \Big\{ \frac{1}{2}\sigma_z^2\varphi^2 \frac{\partial^2 V^{UI}}{\partial x^2} + \varphi\left(\sigma_y\sigma_zh \frac{\partial^2 V^{UI}}{\partial x \partial y} + (\mu+y)\frac{\partial V^{UI}}{\partial x} \right)\Big\} 
  + \frac{1}{2}\sigma_y^2 h^2 \frac{\partial^2 V^{UI}}{\partial y^2} =0,}   
\end{multline}
subject to the terminal condition $V^{UI}(T, x, y) = - e^{-\gamma x}$.
Point-wise optimization provides the optimal controls in feedback form:
$
\varphi^{UI} = -\textstyle(\sigma_y\sigma_zh{\partial^2 V^{UI}}/{\partial x \partial y} + (\mu+y){\partial V^{UI}}/{\partial x} )\allowbreak
/(\sigma_z^2\partial^2 V^{UI}/\partial^2 x)
$.
Making a similar ansatz as before
$
V^{UI}(t, x, y) = -\exp\{-\gamma x +A_{UI}(t) (\mu+y)^2 +B_{UI}(t)\}
$, and substituting into the HJB equation above,
we find $(A_{UI}, B_{UI})$ satisfies the following ODEs:
$$
\textstyle
 \frac{d}{dt}A_{UI}(t) = \frac{1}{2\sigma_z^2}+\frac{2\sigma_y}{\sigma_z}h(t)A_{UI}(t), \quad\text{and}\quad
 \frac{d}{dt}B_{UI}(t) = -\sigma_y^2 h^2(t)A_{UI}(t) ,
$$
subject to $A_{UI}(T)=B_{UI}(T)=0$.
Solving the ODEs yields
\begin{align}
A_{UI}(t) &=-\textstyle\frac{\sinh\left(\frac{\sigma_y}{\sigma_z}(T-t)\right)\cosh\left(\frac{\sigma_y}{\sigma_z}t\right)}{2\sigma_y\sigma_z \cosh\left(\frac{\sigma_y}{\sigma_z}T\right)}, \label{UI:A}
\\B_{UI}(t) 
&=  \textstyle\frac{\sigma_y}{4\sigma_z}(T-t)\tanh\left(\frac{\sigma_y}{\sigma_z}T\right)+\frac{1}{2}\log\frac{\cosh\left(\frac{\sigma_y}{\sigma_z}t\right)}{\cosh\left(\frac{\sigma_y}{\sigma_z}T\right)} +  \frac{\sinh\left(\frac{\sigma_y}{\sigma_z}(T-t)\right)\sinh\left(\frac{\sigma_y}{\sigma_z}t\right)}{4\cosh\left(\frac{\sigma_y}{\sigma_z}T\right)},\label{UI:B}
\end{align}
and the value function in the uninformed case is therefore~\eqref{value:UI}, 
with the optimal (filtering) strategy given by the expression stated in Theorem~\ref{thm: continuous}.
By Corollary 5.14 in~\cite[Chapter 3]{karatzas2012brownian},
the value function is a true martingale. 
\end{proof}

As in the single-period setting in Section~\ref{ssec: single-period pricing}, 
the indifference price $\wC(0;T)$ is 
proportional to the reciprocal of the investor's risk aversion $\gamma$, and is
monotonically increasing with respect to  ${\sigma_y T}/{\sigma_z}$.
Moreover, if the trading horizon $T$ is large compared to the ratio ${\sigma_z}/{\sigma_y}$, then the indifference price is approximately linear in $T$. 
We term the ratio of the lhs to $T$ as the \emph{limiting} subscription rate and denote it by $\barc$, i.e.,
\begin{equation}
\label{average rate}
\textstyle\barc =  \frac{\sigma_y}{4\gamma\sigma_z}\tanh\left(\frac{\sigma_y}{\sigma_z}T\right).
\end{equation}
We also note that, as long as $\sigma_z>0$, we have the inequality $\textstyle\barc  \leq \frac{1}{4\gamma}\,\frac{\sigma_y}{\sigma_z}$.
This suggests that the \IA~ charges  the ``noise ratio'' $\sigma_y/\sigma_z$,
i.e. how large the volatility $\sigma_y$ of the factor process $Y$ is compared with the volatility $\sigma_z$ of the stock price $S$. 

\section{Best Time to Subscribe}\label{sec:subscribe}

The limiting subscription rate introduced in \eqref{average rate} is the rate the investor is willing to pay starting at time zero until the the end of the trading horizon -- specifically, the decision is made once at the beginning of the trading horizon such that she is indifferent between purchasing the information or filtering the information. 
The investor may, however, wish to wait and purchase the information only at a later point in time. To this end, let $c(t)$ denote the subscription rate set by the {\IA} for receiving information at time $t$. The relationship of the subscription rate $c(t)\geq 0$ and the one-time price $C(0;T)$ in Section~\ref{sec: continuous} is given by
$
C(0;T) = \int_0^T c(t) dt <\infty
$.
Notably, this rate is assumed to be deterministic and does not depend on the stock price $S$ or the trading signal $Y$ -- if it did depend, e.g., on $Y$, then the rate itself  reveals $Y$ and hence the investor would not purchase the information and instead simply query the price of the information. 

With given subscription rate $c(t)$, the investor's optimization problem is to find the optimal time $\tau$ at which to purchase the information, and the admissible trading strategy $\varphi\in\A(S, Y^\tau)$, where the process $Y^\tau$ is defined as $Y_t^\tau = y \one_{\{t\in[0,\tau)\}} + Y_t \one_{\{t\in[\tau,T]\}}$.
That is, (i) for all $t<\tau$, the investor filters $Y$ and optimal trades based on the filtered process; (ii) for all $t\in[\tau,T]$, she is subscribed to the information feed, and receives information about $Y$, and thus has no need to filter it. To define the optimization problem, let $\mathcal{T}$ denote the collection of all $\mathcal{F}^S$-stopping times, the investor aims to find the maximizer of 
\begin{align}
\label{prob: subscription}
\max_{\tau\in\mathcal{T}, \;
\varphi\in\A(S, Y^\tau)}
\E\big[-\textstyle\exp\left\{-\gamma X_T\right\}\big],
\end{align}
where the wealth follows 
\begin{align}
\label{wealth dyn: subscription}
X_t =\textstyle x -  \int_0^t c(u)\one_{[u\ge\tau]}\, du 
+ \int_0^t \varphi_u\,(\mu+Y_u)\, du + \sigma_z \int_0^t \varphi_u\, dB_u^Z.
\end{align}

Next, we prove that the optimal time to subscribe is not necessarily unique and that when it is unique, the optimal time is deterministic. When it is not unique, any time $t$ in the interval $[\tau_e[c],\tau_\ell[c]]$ satisfying $\textstyle\int_t^{\tau_\ell[c]}
\Big( c(s)-\barc+\ell(s)\Big) \; ds= 0$, where $\tau_e[c]$ and $\tau_\ell[c]$ are deterministic, is equally optimal.
The deterministic time $\tau_\ell[c]$ is defined\footnote{Here, if there is no $t$ that satisfies the strict inequality in \eqref{tau: latest}, without loss of generality we set $\tau_\ell[c]=T$.} as
\begin{equation}
\label{tau: latest}
\tau_\ell[c] := \inf\left\{\tT:\textstyle\int_t^u \Big(c(s)-\barc+ \ell(s)\Big) ds <0, \quad \forall u\in(t,T]\right\},
\end{equation}
where $\barc$ 
is the constant subscription rate induced by the indifference price in Theorem \ref{thm: continuous} and given by~\eqref{average rate} and $\ell(t):=\sigma_y\sinh(\frac{\sigma_y}{\sigma_z}(T-2\,t))/4\gamma\,\sigma_z\cosh(\frac{\sigma_y}{\sigma_z}T)$. Further, the deterministic time $\tau_e[c]$ is defined  as
\begin{equation}
\label{eqn: best time}
\tau_e[c]= \inf\left\{\tT:\textstyle\int_t^{\tau_\ell[c]}
\Big( c(s)-\barc+\ell(s)\Big) \; ds= 0\right\}.
\end{equation}
We then have the following Proposition~\ref{prop: subscription}. 
\begin{proposition}
\label{prop: subscription}
Suppose the {\IA} charges the rate $(c(t))_{t\ge0}$, then the optimal time for the investor to start the subscription is any  $t\in[\tau_e[c]\,,\,\tau_\ell[c]]$ such that $\textstyle\int_t^{\tau_\ell[c]}
\Big( c(s)-\barc+\ell(s)\Big) \; ds= 0$. 
\end{proposition}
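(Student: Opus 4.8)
The plan is to reduce the optimal-stopping problem~\eqref{prob: subscription} to a purely deterministic optimization over the subscription time, exploiting the CARA--Gaussian structure together with the closed forms already obtained in Theorem~\ref{thm: continuous}. Throughout I write $h(t)=\tanh(\frac{\sigma_y}{\sigma_z}t)$ and reuse the coefficient functions $A_I,B_I,A_{UI},B_{UI}$ from its proof.

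\emph{Post-subscription value and its filtered projection.} First I would solve backwards from a candidate subscription time. For any $\F^S$-stopping time $\tau$, on $[\tau,T]$ the investor is fully informed and pays the deterministic residual cost $\int_\tau^T c(u)\,du$; hence her value at $\tau$ is obtained from Theorem~\ref{thm: continuous}(1) by replacing the lump cost $C(0;T)$ with $\int_\tau^T c(u)\,du$, giving $V^{\mathrm{after}}(\tau,X_\tau,Y_\tau)=-\exp\{-\gamma(X_\tau-\int_\tau^T c\,du)+A_I(\tau)(\mu+Y_\tau)^2+B_I(\tau)\}$. Just before subscribing the investor only observes $\F_\tau^S$, so the relevant reward for the pre-subscription phase is $\E[V^{\mathrm{after}}(\tau,X_\tau,Y_\tau)\mid\F_\tau^S]$. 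Using the Kalman--Bucy filter, $Y_\tau\mid\F_\tau^S\sim\mathcal N(\wY_\tau,\Sigma_\tau)$ with $\Sigma_\tau=\sigma_y\sigma_z\,h(\tau)$ solving the Riccati equation $\dot\Sigma=\sigma_y^2-\Sigma^2/\sigma_z^2$, together with the Gaussian integral of a quadratic exponent, which converges because $1-2A_I(\tau)\Sigma_\tau=\cosh(\frac{\sigma_y}{\sigma_z}T)/[\cosh(\frac{\sigma_y}{\sigma_z}(T-\tau))\cosh(\frac{\sigma_y}{\sigma_z}\tau)]>0$, I obtain the filtered stopping reward $-\exp\{-\gamma(X_\tau-\int_\tau^T c\,du)+\widetilde A(\tau)(\mu+\wY_\tau)^2+\widetilde B(\tau)\}$, where $\widetilde A(\tau)=A_I(\tau)/(1-2A_I(\tau)\Sigma_\tau)$ and $\widetilde B(\tau)=B_I(\tau)-\tfrac12\log(1-2A_I(\tau)\Sigma_\tau)$.

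\emph{The decoupling that forces a deterministic optimum.} The key structural observation, which I would verify via the identity $1+\tanh X\tanh Y=\cosh(X+Y)/(\cosh X\cosh Y)$, is that $\widetilde A(\tau)=A_{UI}(\tau)$ and $\widetilde B(\tau)=\tfrac12\log[\cosh(\frac{\sigma_y}{\sigma_z}\tau)/\cosh(\frac{\sigma_y}{\sigma_z}T)]$. Because the quadratic coefficient of the stopping reward already equals $A_{UI}$, and the pre-subscription (filtering) value function admits the ansatz $-\exp\{-\gamma x+A(t)(\mu+\hat y)^2+B(t)\}$ whose $A$-component obeys the \emph{linear} ODE $\dot A=\tfrac1{2\sigma_z^2}+\tfrac{2\sigma_y}{\sigma_z}h\,A$ from the proof of Theorem~\ref{thm: continuous}, uniqueness forces $A(t)\equiv A_{UI}(t)$ irrespective of the stopping rule. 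Consequently the only state-dependent term $A_{UI}(t)(\mu+\wY_t)^2$ is common to stopping and to continuing, so the incremental value of subscribing at time $t$ depends on $t$ alone and not on the random filtered state $\wY_t$. This collapses the stopping problem onto the scalar deterministic coefficient $B$: subscribing deterministically at $\tau$ gives $B(0)=F(\tau):=\widetilde B(\tau)+\gamma\int_\tau^T c\,du+\int_0^\tau\sigma_y^2 h^2(s)A_{UI}(s)\,ds$, and maximizing utility over deterministic $\tau$ amounts to minimizing the exponent $F(\tau)$. Differentiating and simplifying with the hyperbolic addition formulas (the main routine computation) yields $F'(\tau)=-\gamma\,(c(\tau)-\barc+\ell(\tau))$, so that $F(\tau)=\text{const}-\gamma\,\Phi(\tau)$ with $\Phi(\tau):=\int_0^\tau(c(s)-\barc+\ell(s))\,ds$; thus minimizing $F$ is the same as maximizing the continuous function $\Phi$ on $\T$. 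A short argument then identifies the defining quantities: since $\int_t^u(c-\barc+\ell)\,ds=\Phi(u)-\Phi(t)$, condition~\eqref{tau: latest} says $\Phi(t)$ strictly exceeds all later values, so $\tau_\ell[c]=\max\arg\max\Phi$, while~\eqref{eqn: best time} says $\Phi(t)=\Phi(\tau_\ell[c])$, so $\tau_e[c]=\min\arg\max\Phi$. Hence the optimal deterministic times are exactly $\arg\max\Phi=\{t\in[\tau_e[c],\tau_\ell[c]]:\int_t^{\tau_\ell[c]}(c-\barc+\ell)\,ds=0\}$, as claimed.

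\emph{Main obstacle.} The remaining and hardest step is to prove that no \emph{random} $\F^S$-stopping time can beat the best deterministic one, i.e.\ to justify the reduction above for all stopping times rather than only deterministic ones. I would do this by a verification/Snell-envelope argument: define $B^*(t)$ as the solution of the obstacle problem built from the deterministic data (continuation ODE $\dot B=-\sigma_y^2 h^2A_{UI}$ versus stopping obstacle $\widetilde B(\cdot)+\gamma\int_\cdot^T c$), set $U_t:=-\exp\{-\gamma X_t+A_{UI}(t)(\mu+\wY_t)^2+B^*(t)\}$, and show via It\^o under $\F^S$ that $U_t$ is a supermartingale for every admissible $\varphi$ and every stopping rule, and a martingale under the optimal filtering strategy together with any deterministic $\tau\in\arg\max\Phi$; matching at $\tau$ with the informed value above then gives $\E[-\exp\{-\gamma X_T\}]\le U_0$ with equality for the deterministic optimizer. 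The subtlety is that the supermartingale inequality must hold pathwise in the filtered state, which is exactly what the coincidence $\widetilde A=A_{UI}$ guarantees; verifying integrability (the true-martingale property, as in Corollary~5.14 of~\cite[Ch.~3]{karatzas2012brownian}) and that the candidate $B^*$ genuinely solves the obstacle problem are the technical points to discharge.
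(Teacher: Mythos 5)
Your proposal is correct and follows essentially the same route as the paper's proof: you compute the filtered post-subscription reward $\E[V^I_\tau\,|\,\F^S_\tau]$ (your identities $\widetilde A=A_{UI}$ and $\widetilde B(\tau)=\tfrac12\log\bigl[\cosh(\tfrac{\sigma_y}{\sigma_z}\tau)/\cosh(\tfrac{\sigma_y}{\sigma_z}T)\bigr]$ are exactly the paper's \eqref{value:subscribe}, and your computation $F'(\tau)=-\gamma\,(c(\tau)-\barc+\ell(\tau))$ checks out against the hyperbolic addition formulas), then exploit the shared quadratic coefficient $A_{UI}$ to collapse the stopping problem to a deterministic scalar problem whose maximizers are precisely $\{t:\int_t^{\tau_\ell[c]}(c-\barc+\ell)\,ds=0\}$. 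The step you flag as the main remaining obstacle---ruling out random $\F^S$-stopping times via an obstacle problem for the $B$-coefficient and a supermartingale/verification argument---is precisely how the paper finishes (its quasi-variational inequality \eqref{HJB:flexible time}, the variational ODE for $B_F$ solved by comparison, and the closed form \eqref{eq:BF}), so executing your sketched verification would reproduce the paper's argument.
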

\begin{proof}
First we investigate the value function once information has been purchased. If the investor purchases the information at time $\tT$, then from the wealth dynamic~\eqref{wealth dyn: subscription} and a similar procedure to derive the informed value function~\eqref{value:I}, we have
\allowdisplaybreaks
\begin{align*}
V_t^I  
= -\textstyle\exp\left\{-\gamma(X_t -\int_t^T c(s)ds )-\frac{\left(\mu+{Y}_{t}\right)^2}{2\sigma_y\sigma_z}\tanh\left(\frac{\sigma_y}{\sigma_z}(T-{t})\right)-\frac{1}{2}\log \cosh\left(\frac{\sigma_y}{\sigma_z}(T-t)\right)\right\}.
\end{align*}
The expectation $\wV^I_t$ of the informed value function just prior to acquiring the information is
\begin{align}\label{value:subscribe}
\wV^I_t :=
\E\left[ V_t^I|\mathcal{F}^S_t\right] 
= V^{UI}(t, X_t, \wY_t)\,\textstyle\exp\Big\{\gamma \int_t^T c(s) ds-B_{UI}(t) + \tfrac{1}{2}\log \frac{\cosh\left(\frac{\sigma_y}{\sigma_z}t\right)}{\cosh\left(\frac{\sigma_y}{\sigma_z}T\right)}\Big\}. 
\end{align}
The second equality follows from using $V^{UI}$  in~\eqref{value:UI} and deterministic $B_{UI}$ in~\eqref{UI:B}.

Next, we show that $\tau_\ell[c]$ in \eqref{tau: latest} is an upper bound on the time at which the investor is indifferent to purchasing the information feed. To this end, if the investor  purchases information at $u\in[t,T]$, then she uses an uninformed strategy on $[t,u]$ (with filtering to obtain $\wY$) and uses an informed strategy on $[u,T]$ (with knowledge of $Y$). 
As $(V^{UI}_t)_{\tT}$ is a true $\mathcal{F}^S$-martingale (by~\eqref{value:UI}),  we have
\begin{align}
 \E[\wV_u^{I}|\mathcal{F}^S_t] 
&=V^{UI}_t\;\textstyle\exp\Big\{\gamma \int_u^T c(s) ds-B_{UI}(u) + \frac{1}{2}\log \frac{\cosh\left(\frac{\sigma_y}{\sigma_z}u\right)}{\cosh\left(\frac{\sigma_y}{\sigma_z}T\right)}\Big\}. \label{condition: u} 
\end{align}
The investor will not purchase the information later than $t$ if  $\E[\wV_u^{I}|\mathcal{F}^S_t]<\wV_t^I$, $\forall\;u\in(t,T]$. 
Comparing~\eqref{value:subscribe} and~\eqref{condition: u} and plugging in the expression for $B_{UI}$ from~\eqref{UI:B}, this means
\begin{align}\label{ineq:latest}
\textstyle\int_t^u c(s)\, ds 
&<
\textstyle\int_t^u \barc-\ell(s) ds. 
\end{align}
Thus, with $\barc$ defined in~\eqref{average rate}, we introduce the `latest subscription time' $\tau_\ell[c]$ of a subscription rate $c$ in \eqref{tau: latest}.
If $\tau_{\ell}[c]\in[0,T]$, suppose the investor decides to purchase the information at time $\tau_0$, where $\tau_0$ is a stopping time in $\mathcal{T}$, we can infer the \emph{strict} inequality~\eqref{ineq:latest} holds if $\tau_0>\tau_\ell[c]$. Therefore, if $\tau_0>\tau_\ell[c]$, 
then $
\E\left[\wV_{\tau_0}^I|\mathcal{F}^S_{\tau_\ell[c]}\right] < \wV^I_{\tau_\ell[c]}
$, the investor will immediately purchase the information at $\tau_\ell[c]$. 
By the definition of $\tau_\ell[c]$, there exists $t\in[0,\tau_\ell[c]]$ such that
$$
\textstyle
\int_t^{\tau_\ell[c]} \Big(c(u)-\barc+ \ell(s) \Big)\,du \geq 0, 
$$
which makes $\tau_e[c]$ given in~\eqref{eqn: best time} well defined. 

As we have demonstrated that $\tau_\ell[c]$ is an upper bound on the optimal stopping time, we focus now on the shortened interval $[0,\tau_\ell[c]]$. 
Let $V^F(t, X_t, \wY_t) $ denote the investor's value function   who can choose the purchasing time on $[0,\tau_\ell[c]]$ for the information.
In the continuation region, $V^F$ satisfies
\begin{align*}
\textstyle\sup_\varphi \left\{ \frac{1}{2}\sigma_z^2\varphi^2 \frac{\partial^2 V^{F}}{\partial x^2} + \varphi\left(\sigma_y\sigma_z h \frac{\partial^2 V^{F}}{\partial x \partial y} + (\mu+y)\frac{\partial V^{F}}{\partial x} \right)\right\}
+ \frac{\partial V^{F}}{\partial t} + \frac{1}{2}\sigma_y^2 h^2 \frac{\partial^2 V^{F}}{\partial y^2} =0, 
\end{align*}
Point-wise optimization yields
the optimal controls in feedback form in the continuation region
$
\varphi^{F} = -(\sigma_y\sigma_zh{\partial^2 V^{F}}/{\partial x \partial y} + (\mu+y){\partial V^{F}}/{\partial x})/(\sigma_z^2\partial^2 V^{F}/\partial^2 x).
$
Standard arguments imply that $V^F$ satisfies the quasi-variational inequality on $[0,\tau_\ell[c]]$:
\begin{equation}\label{HJB:flexible time}
\left\{
\begin{aligned}
&
\textstyle\max\Big\{\frac{\partial V^{F}}{\partial t} + \frac{\sigma_y^2}{2} h^2 \frac{\partial^2 V^{F}}{\partial y^2}  
- \frac{\left(\frac{\sigma_y}{\sigma_z}h \frac{\partial^2 V^{F}}{\partial x \partial y} + \frac{\mu+y}{\sigma_z^2}\frac{\partial V^{F}}{\partial x} \right)^2}{2\frac{\partial^2 V^{F}}{\partial x^2}};
\wV^I(t,x,y)- V^{F}(t, x, y)
\Big\} =0\;,
\\
& V^F(\tau_\ell[c],x,y) = \wV^I(\tau_\ell[c],x,y).
\end{aligned}
\right.
\end{equation}
When $V^F(t,x,y) = \wV^I(t,x,y)$, it is optimal for the investor to purchase the information. 
We make the ansatz $V^F(t,x,y) = -\textstyle\exp\left\{-\gamma x +A_{UI}(t)(\mu+y)^2 + B_F(t)\right\}$, where $A_{UI}$ is given by~\eqref{UI:A}. Substituting this ansatz into the quasi-variational inequality~\eqref{HJB:flexible time}, 
we obtain a quasi-variational inequality for $B_F$ on $[0,\tau_\ell[c]]$:
\begin{equation*}
\left\{
\begin{aligned}
&\textstyle\max\Bigg\{
B_F(t)- \textstyle\gamma \int_t^T c(u)du - \frac{1}{2} \log \frac{\cosh\left(\frac{\sigma_y}{\sigma_z}t\right)}{\cos\left(\frac{\sigma_y}{\sigma_z}
T\right)}\;;\;
\frac{d}{dt}B_F(t) +\sigma_y^2\tanh^2\left(\frac{\sigma_y}{\sigma_z}t\right) A_{UI}(t)\Bigg\} =0,
\\
&B_F(\tau_\ell[c])= \gamma \textstyle\int_{\tau_\ell[c]}^T c(u) du + \frac{1}{2} \log \frac{\cosh\left(\frac{\sigma_y}{\sigma_z}\tau_\ell[c]\right)}{\cos\left(\frac{\sigma_y}{\sigma_z}
T\right)}.
\end{aligned}
\right.
\end{equation*}
With $A_{UI}$ by~\eqref{UI:A}, using a comparison argument (cf., e.g.,~\cite{chicone2006ordinary}), the unique solution to this variational ODE is
\begin{align}\label{eq:BF}
B_F(t)
&=\gamma \textstyle\int_{\tau_\ell[c]}^T c(u)\,du + \gamma\int_t^{\tau_\ell[c]} \left(\barc-\ell(u)\right)\, du
+\frac{1}{2}\log\frac{\cosh\left(\frac{\sigma_y}{\sigma_z}t\right)}{\cosh\left(\frac{\sigma_y}{\sigma_z}T\right)}.
\end{align}
It is optimal for the investor to make the purchase at time $\tau\in[0,\tau_\ell[c]]$ whenever $V^F(\tau,x,y) = \wV^I(\tau,x,y)$. Using the above expression and the form of $V^F$ and $V^I$, $\tau$ is deterministic 
\begin{align}
\label{stopping rule}
\textstyle\int_\tau^{\tau_\ell[c]} (c(s) - \barc+\ell(s))\ ds = 0.
\end{align}
There may be multiple $\tau$ satisfying~\eqref{stopping rule}, i.e., at which the investor is indifferent in entering into the subscription. Among them, the earliest time $\tau_e[c]$ is given by~\eqref{eqn: best time}. 
\end{proof}

We next point out two special cases.

\paragraph{I} If the {\IA} charges the constant limiting subscription rate $\barc$ given by~\eqref{average rate}, then $\tau_e[\barc] = \tau_\ell[\barc] = T/2$, and the optimal time is unique. As $\sinh(x)$ is strictly positive on $(0,\infty)$ and strictly negative on $(-\infty,0)$, 
$
\tau_\ell[\barc] = \textstyle\inf\left\{\tT: \int_t^u  \ell(s) ds <0, \forall u\in(t,T]\right\}
= \frac{T}{2}.
$
Notice that for $t\in[0,\tau_\ell[\barc])$, 
$\textstyle\int_t^{\tau_\ell[\barc]} \ell(s)\; ds> 0
$. 
Hence for the limiting subscription rate $\barc$ in~\eqref{average rate},  the investor purchases the information subscription at time $\tau_e[\barc]={T}/{2}=\tau_{\ell}[\barc]$.

\paragraph{II} The indifference subscription rate $\wc$ with $\tau_e[\wc]=0 < \tau_\ell[\wc]=T$ is 
\begin{align}
\label{fair rate finite}
\wc(t) = \barc -\ell(t).
\end{align}
This follows, as the term under the integral in the expression for $\tau_\ell[\wc]$ in \eqref{tau: latest} is identically zero, hence $\tau_\ell[\wc]=T$.
Similarly, by checking~\eqref{eqn: best time} we have $\tau_e[\wc]=0$. 
Therefore, the investor is indifferent between purchasing the information subscription or trading  with the information of the stock over the whole time period, hence the earliest time to make the purchase is therefore $\tau_e[\wc] =0\ne \tau_\ell[\wc]=T$.

\section{Example}
\begin{wrapfigure}{r}{0.5\textwidth}
\centering
\includegraphics[width=0.38\textwidth]{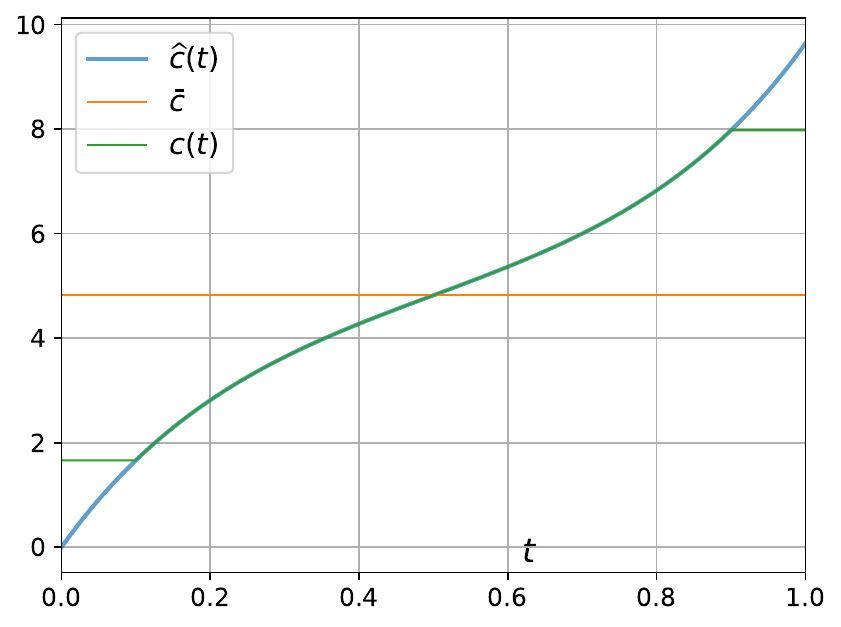}
\caption{\footnotesize{The indifference subscription rate $\wc(t)$, the limiting subscription rate $\overline{c}(t)$, and the prescribed subscription rate $c(t)$.}}
    \label{fig:subscription-rate}
~\\[0.5em]

\includegraphics[width=0.4\textwidth]{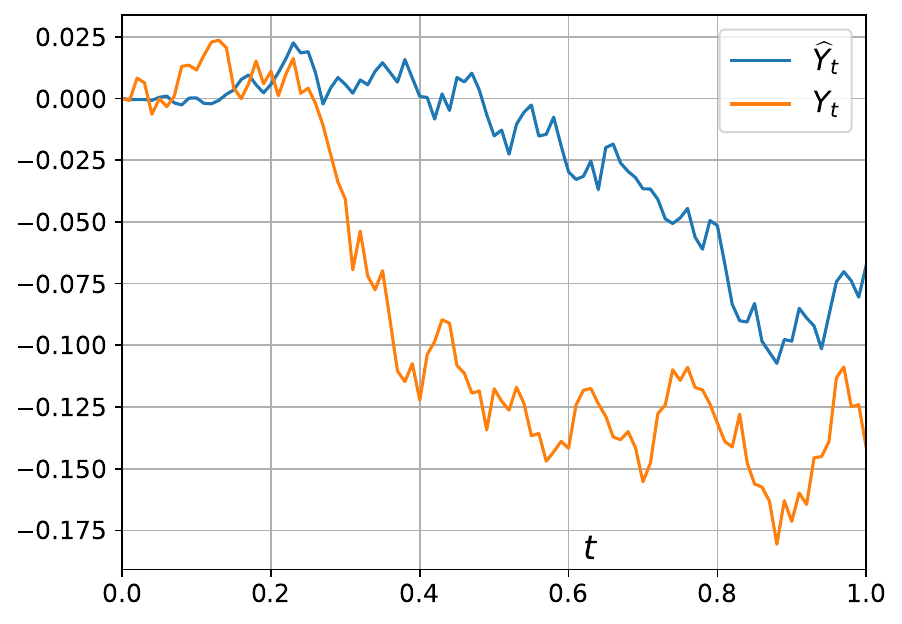}
\caption{\footnotesize{The signal process $Y_t$ and the filtered signal process $\wY_t$.}}
\label{fig:filtered-Y}
~\\[0.5em]
            
\includegraphics[width=0.4\textwidth]{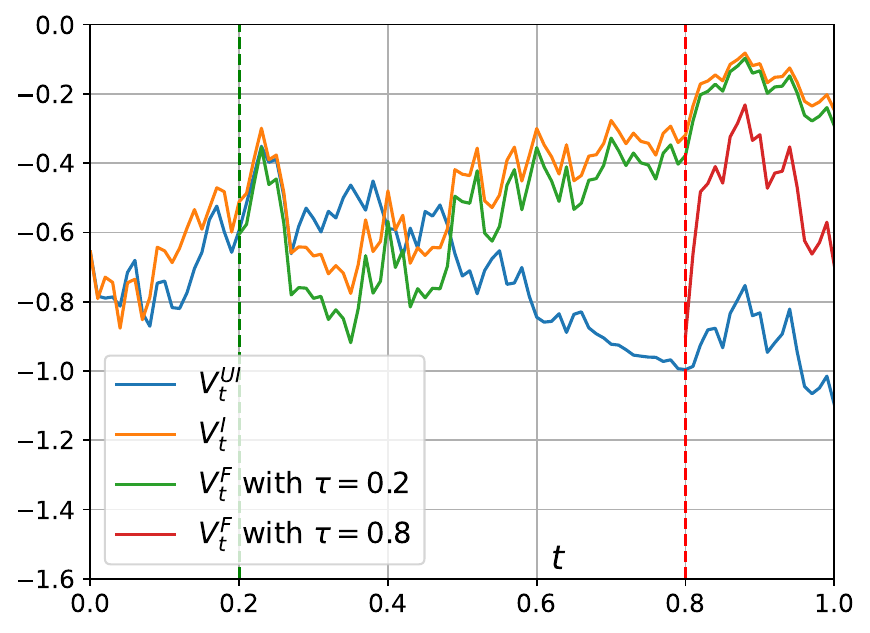}
\caption{\footnotesize{The value functions without information $V_t^{UI}$, with information $V_t^I$ (purchased at time 0 with $\wC(0,T)$) , with prescribed subscription rate $c(t)$ $V_t^F$. }}
~\vspace*{-3em}
\label{fig:value-func}
\end{wrapfigure}  
We provide an illustrative example of the ideas that we develop with the following model parameter assumptions:
\begin{align*}
&\mu     = 0.05, \; 
\sigma_y = 0.1, \;
\sigma_z = 0.05, \;
T=1, \; 
\\
&\gamma = 0.1, \;
x = 0, \;
y = 0, \;
S_0 = 10.
\end{align*}
Using these parameters we obtain the indifference subscription rate $\wc(t)$, and further include the limiting subscription rate $\barc$ and introduce a prescribed subscription rate $c(t)$ shown in Figure \ref{fig:subscription-rate}. As the prescribed subscription rate lies above the indifference rate prior to $t=0.2=\tau_e[c]$, it is never optimal for the investor to subscribe early than $t=0.2$. Similarly, as the prescribed subscription rate lies below the indifference price after $t=0.8=\tau_\ell[c]$, it is optimal for the investor to subscribe no later than $t=0.8$.   
The investor can subscribe at any time between $\tau_e[c]$ and $\tau_\ell[c]$ with equal value. In Figure \ref{fig:filtered-Y}, we show a sample path of the true and  filtered signals $Y$ and $\wY$, respectively. In tandem, Figure \ref{fig:value-func} shows the corresponding sample paths of the value functions with and without subscribing at different times.


\section{Conclusions}

In this study, we examined the valuation and optimal timing of acquiring additional information about an asset's price trajectory. There are several avenues for future research, including  exploring multi-agency information purchases, competitive dynamics among information providers, investigating alternate information flow models (e.g., using mean-reverting processes), and even extending to infinite-horizon subscribing models. Despite the specific model and the underlying dynamics, we can anticipate the price of information is increasing with respect to the signal-to-noise ratio, decreasing with respect to the investor's risk aversion, and proportional to the trading horizon $T$ when it is relatively large.

\section*{Acknowledgments}
We would like to acknowledge  fruitful discussions with Paolo Guasoni, Mike Ludkovski, Steven Shreve, and Zhi Li. We thank Liam Welsh for his careful reading of the first version.
\vspace*{-1em}

\bibliographystyle{siamplain}
\bibliography{references}
\appendix
\section{Proofs}\label{appdix: proofs}
\subsection{Proof of Theorem~\ref{thm: single-period}}
\label{proof: single-period}
\begin{proof}

We solve the optimization problem with  and without purchasing the information. 

\paragraph{Informed case}
In the informed case, the investor chooses to purchase the information, hence over $\sigma(Y)$-measurable $\varphi$, she optimizes
$
\E\left[-\textstyle\exp\left\{-\gamma (x-C+\varphi (Y+Z)) \right\}\;|\;\sigma(Y)\right]
= -\textstyle\exp\left\{-\gamma (x-C) -\gamma \varphi(Y+\mu) +\frac{1}{2} \gamma^2\varphi^2 \sigma_Z^2\right\}
$
, which yields
$\varphi_{I}^* = ({\mu+Y})/{\gamma\sigma^2_Z}$
and by the characteristic function of $Y$, we have~\eqref{i: single-period utility}.

\paragraph{Uninformed case}
In the uninformed case, the investor does not purchase the information, and hence over deterministic $\varphi\in\R$ she 
maximizes
$
\E\left[-\textstyle\exp\left\{-\gamma(x+\varphi(Y+Z))\right\}\right] 
=\textstyle\exp\left\{-\gamma x - \gamma\varphi(\mu+y) + \frac{1}{2}\gamma^2\varphi^2\left(\sigma^2_Y+\sigma_Z^2\right)\right\}
$.
Therefore, the (deterministic) optimal strategy is
$
\varphi_{UI}^* = {\mu+y}/{\gamma\left(\sigma^2_Y+\sigma_Z^2\right)},
$
and we have~\eqref{ui: single-period utility}. 

The investor is indifferent to the two options whenever $V^I(x,y;\wC) = V^{UI}(x,y)$ and solving for $\wC$ provides us with~\eqref{c: single-period}.
\end{proof}


%
%



\subsection{Proof of Lemma~\ref{lem: filtering}}\label{appdx: filtering}
\begin{proof}
Here,
we use $L^2([0,T]^2)$ to denote the separable Hilbert spaces of real-valued, square-integrable functions; two functions are considered equal if they coincide almost everywhere.
Using Fokker–Planck equation, we can verify that $(S_t, Y_t, B^Z_t)$ are jointly Gaussian processes. In particular, $S_t - (\mu+y)t-S_0$ is a mean zero Gaussian process with covariance function
$
\E\left[S_t , S_u \right] = \sigma_z^2 u +\sigma_y^2 \left(\frac{t}{2}-\frac{u}{6}\right)u^2 = \sigma_z^2 u + \sigma_y^2\textstyle\int_0^t \int_0^u \min\{v,s\} dv ds 
$,
$\forall\;0\leq u\leq t\leq T$.
By checking the conditions in~\cite[Theorem 1]{shepp1966radon}, and by~\cite[Proposition 2]{hitsuda1968representation}, 
there exists a unique Brownian motion $(\widehat{B}^Z_t)_{\tT}$ and a unique kernel $\kappa\in L^2([0,T]^2)$ such that
\begin{align}\label{eqn: hBz}
S_t = S_0 +(\mu+y)t + \sigma_z \,\widehat{B}^Z_t -\textstyle\int_0^t \int_0^u \kappa(u,v) \,d\widehat{B}^Z_v, \qquad 0\leq t\leq T.
\end{align}
Here, the kernel $\kappa$ satisfies the following integral equation:
$$
\sigma_z\,\kappa(t,u) - \textstyle\int_0^u\kappa(t,v)\,\kappa(u,v)\,dv = -{\sigma_y^2} u, \qquad 0\leq u\leq t\leq T,
$$
and following similar arguments as in~\cite[Section 5]{guasoni2006asymmetric}, $\kappa$ can be calculated explicitly as
\begin{align}
\kappa(t,u) = - {\sigma_y} h(u) \1{u\in[0, t]}. 
\end{align}
Recall that for the filtration $\{\F_t^{S}\}_{\tT}$ generated by the stock price, we define the projection
$
\wY_t = \E\left[Y_t | \F^S_t\right]. 
$
Hence, by the uniqueness of $(\widehat{B}^Z_t)_{\tT}$ supported by  $\{\F_t^{S}\}_{\tT}$, comparison of the dynamic of $S_t$ from~\eqref{dyn: continuous-time stock} and~\eqref{eqn: hBz} yields
~\eqref{eqn: filtered Y} and~\eqref{eqn: filtered BM} in Lemma~\ref{lem: filtering}.
\end{proof}



\end{document}